\newtheorem{definition}{Definition}
\newtheorem{theorem}{Theorem}
\newtheorem{example}{Example}
\newtheorem{lemma}{Lemma}
\newtheorem{remark}{Remark}
\newtheorem{construction}{Construction}
\def\qi#1 {\fbox {\footnote {\ }}\ \footnotetext { From Qi: {\color{red}#1}}}
\newcolumntype{C}{>{\centering\arraybackslash}X} % centered version of "X" type
\title{On the cross-correlation properties of large-size families of Costas arrays}
\begin{document}

\author{Runfeng~Liu, \and Qi~Wang
%, {\em Member, IEEE}
\thanks{
Runfeng Liu is with the Department of Computer Science and Engineering, Southern University of Science and Technology, Shenzhen 518055, China. E-mail: 12332425@mail.sustech.edu.cn.
}
\thanks{
Q. Wang is with the Department of Computer Science and Engineering, and also with National Center for Applied Mathematics Shenzhen, Southern University of Science and Technology, Shenzhen 518055, China. E-mail: wangqi@sustech.edu.cn.
}

}

\maketitle

\begin{abstract}
Costas arrays have been an interesting combinatorial object for decades because of their optimal aperiodic auto-correlation properties. Meanwhile, it is interesting to find families of Costas arrays or extended arrays with small maximal cross-correlation values, since for applications in multi-user systems, the cross-interferences between different signals should also be small. The objective of this paper is to study several large-size families of Costas arrays or extended arrays, and their values of maximal cross-correlation are partially bounded for some cases of horizontal shifts $u$ and vertical shifts $v$. Given a prime $p \geq 5$, a large-size family of Costas arrays over $\{1, \ldots, p-1\}$ is investigated, including both the exponential and logarithmic Welch Costas arrays. An upper bound
on the maximal cross-correlation of this family for arbitrary $u$ and $v$ is given. We also show that the maximal cross-correlation of the family of power permutations over $\{1, \ldots, p-1\}$ for $u = 0$ and $v \neq 0$ is bounded by $\frac{1}{2}+\sqrt{p-1}$. Furthermore, we give the first nontrivial upper bound on the maximal cross-correlation of the larger family including both exponential Welch Costas arrays and power permutations over $\{1, \ldots, p-1\}$ for arbitrary $u$ and $v=0$ that it equals $(p-1)/t$ where $t$ is the smallest prime divisor of $(p-1)/2$ if p is not a safe prime and is at most $(p-1)^{\frac{1}{2}}+(p-1)^{\frac{1}{4}}+\frac{1}{2}$ otherwise.
\end{abstract}

\textbf{Index Terms}: Costas array, cross-correlation, family of Costas arrays, Welch construction, power permutation.

\section{Introduction}\label{sec-intro}

Costas arrays, first introduced by Costas in 1965~\cite{Costas1965Sonar}, have attracted much attention for decades due to their optimal aperiodic auto-correlation properties. A Costas array of order $n$ is a permutation matrix of dots/1's and blanks/0's such that there exists exactly one dot in each row and each column, respectively, and its aperiodic auto-correlation values, except for the trivial case, are less than or equal to $1$. The aperiodic auto-correlation value, at the horizontal shift $u$ and the vertical shift $v$, is the number of coincidences between dots in the permutation matrix and dots in its corresponding shifted version. Costas arrays play a crucial role in the detection of both the position and the velocity of target objects in radar and sonar systems, as they possess such an optimal ``thumbtack'' ambiguity function of auto-correlation values~\cite{GT1982Two},\cite{Costas1984study}.

While the auto-correlation sidelobe values of Costas arrays are always no greater than $1$, Costas arrays in general perform poorly in terms of their cross-correlation properties. In 1985, Freedman and Levanon~\cite{FL1985Two} proved that the maximal cross-correlation of every pair of different Costas arrays of order $n \geq 4$ is at least $2$. Despite this fact, it is still interesting to find families of Costas arrays with small maximal cross-correlation values, since in particular for applications in multi-user systems, the cross-interferences between different signals should be restricted to a level much smaller than the order of the arrays~\cite{DT1991Cross-correlation}. Note that it remains a longstanding open problem whether Costas arrays exist for an arbitrary order $n$~\cite{GT1984Constructions},\cite{WCS2023} and most of Costas arrays are sporadic ones which cannot be constructed systematically~\cite{AP2022Transformation,TR2011Matlab}. The two widely investigated families of Costas arrays are the family of Welch Costas arrays and the family of Lempel-Golomb Costas arrays~\cite{DG2011Maximal,GW2020Note,LW2024cross-correlation}.

Recently, there have been some new results on the two algebraically constructed families of Costas arrays. For the family of exponential Welch Costas arrays, denoted by $\mathcal{W}_p$, Drakakis, Gow, Rickard, Sheekey and Taylor~\cite{DG2011Maximal} showed that its maximal cross-correlation, for arbitrary $u$ and $v=0$, equals $\frac{p-1}{t}$, where $p$ is a prime and $t$ is the smallest prime divisor of $\frac{p-1}{2}$. For the case that $u$ is arbitrary and $v \neq 0$, G\'omez-P\'erez and Winterhof~\cite{GW2020Note} further derived the upper bound of $1+\left\lfloor\left(1-\frac{2}{p-1}\right) p^{1 / 2}\right\rfloor$, using the tools of character theory and the Weil bound. These results are summarized in Table~\ref{table_Wp_Wpel}. The recent progress on the maximal cross-correlation of the family of Lempel-Golomb Costas arrays can be found for example in~\cite{DG2011Maximal,GW2020Note,LW2024cross-correlation}.

While studying the differential property of power functions, Ardalani~\cite{Ardalani2023Contributions} observed an interesting relation between the maximal cross-correlation of the family of exponential Welch Costas arrays and the differential property of the family of power permutations, denoted by $\mathcal{P}_p$. As Ardalani thereby defined the cross-correlation of the family of power permutations, the maximal cross-correlation of the family $\mathcal{P}_p$ was partially bounded. More precisely, Ardalani proved that the maximal cross-correlation of power permutations equals $\frac{p-1}{t}$ for the special case $u = v = 0$, where $t$ is the smallest prime divisor of $\frac{p-1}{2}$, and is bounded by $\left\lceil\frac{p-2}{p-1}(1+\sqrt{p})\right\rceil$ for $v=0$ and $u \neq 0$. The remaining two cases $u = 0, v \neq 0$ and $u \neq 0, v \neq 0$ are left as open problems. Ardalani further extended the original family of exponential Welch Costas arrays $\mathcal{W}_p$ by including the power permutations, denoted by $\mathcal{PW}_p$, and examined its cross-correlation properties. However, except some numerical results, generic theoretical bounds on the maximal cross-correlation of this extended family are missing.

The objective of this paper is to establish certain theoretical bounds on the maximal cross-correlation of several large-size families of Costas arrays or extended arrays, and the main contributions are summarized as follows. First, a large-size family of Costas arrays, denoted by $\mathcal{W}_p^{el}$, is introduced by combining together the family of exponential Welch Costas arrays and that of logarithmic Welch Costas arrays. An upper bound on the maximal cross-correlation of this large family for arbitrary $u$ and $v$ is derived, using the idea that transforms the transcendental equations into polynomial functions~\cite{GS2010Cycles} (see Table~\ref{table_Wp_Wpel}). We then revisit the family $\mathcal{PW}_p$, and give the first nontrivial upper bound of its maximal cross-correlation for arbitrary $u$ and $v = 0$. The main tool is to transform the problem of calculating the cross-correlation values into that of counting the number of fixed points of Welch Costas arrays. Meanwhile, we are able to derive an upper bound on the maximal cross-correlation of the family $\mathcal{P}_p$ for $u = 0$ and $v \ne 0$. See Table~\ref{table_Pp_PWp} for relevant bounds on certain cases of these large-size families of extended arrays.

\begin{table}[h]
\centering
\begin{tabular}{ccccc}
\toprule
Family   &  Family size & Bound & $(u,v)$ & Reference \\
\midrule
\multirow{2}{*}{$\mathcal{W}_p$} &  \multirow{2}{*}{$\phi(p-1)$} & $\frac{p-1}{t}$ & arbitrary u, $v=0$ & \cite{DG2011Maximal} \\
& & $1+\left\lfloor\left(1-\frac{2}{p-1}\right) p^{\frac{1}{2}}\right\rfloor$ & arbitrary u, $v \neq 0$ & \cite{GW2020Note}\\
\addlinespace
\multirow{2}{*}{$\mathcal{W}_p^{el}$} &  \multirow{2}{*}{$2\phi(p-1)$} & $\max \left\{4 p \log _p \alpha+1,1+\left\lfloor\left(1-\frac{2}{p-1}\right) p^{\frac{1}{2}}\right\rfloor\right\}$, \text{ if} $p$ \text{is a safe prime }& \multirow{2}{*}{arbitrary $u$, $v$} & \multirow{2}{*}{This paper} \\
& & $\max \left\{4 p \log _p \alpha+1, \frac{p-1}{t}\right\}$, \text{ if} $p$ \text{is not a safe prime } &  & \\

\bottomrule
\end{tabular}
\caption{Bounds on the maximal cross-correlation of $\mathcal{W}_p$ and $\mathcal{W}_p^{el}$}
\label{table_Wp_Wpel}
\end{table}

\begin{table}[h]
\centering
\begin{tabular}{ccccc}
\toprule
Family   &  Family size & Bound & $(u,v)$ & Reference \\
\midrule
\multirow{4}{*}{$\mathcal{P}_p$} & \multirow{4}{*}{$\phi(p-1) - 1$} & $\frac{p-1}{t}$ & $u=0, v=0$ & \cite{Ardalani2023Contributions} \\
& & $\frac{p-2}{p-1}(1+\sqrt{p})$ & $u \neq 0, v = 0$ & \cite{Ardalani2023Contributions} \\
& & $\frac{1}{2}+\sqrt{p-1}$ & $u = 0, v \neq 0$ & This paper \\
& & \text{ unknown } & $u \neq 0, v \neq 0$ & Open \\
\addlinespace
\multirow{3}{*}{$\mathcal{PW}_p$}& \multirow{3}{*}{$2\phi(p-1) - 1$}  & $(p-1)^{\frac{1}{2}}+(p-1)^{\frac{1}{4}}+\frac{1}{2}$, \text{ if} $p$ \text{is a safe prime}& \multirow{2}{*}{arbitrary $u$, $v = 0$} & \multirow{2}{*}{This paper} \\
& & $\frac{p-1}{t}$, \text{ if} $p$ \text{is not a safe prime} &  & \\
& & \text{ unknown } & arbitrary $u$, $v \neq 0$ & Open \\
\bottomrule
\end{tabular}
\caption{Bounds on the maximal cross-correlation of $\mathcal{P}_p$ and $\mathcal{PW}_p$}
\label{table_Pp_PWp}
\end{table}

The rest of this paper is organized as follows. In Section~\ref{sec-pre}, we provide some necessary definitions and preliminaries that will be useful in the subsequent sections. In Section~\ref{sec-wp}, we introduce a new family that includes both the exponential and the logarithmic Welch Costas arrays and study its maximal cross-correlation. In Section~\ref{sec-pwp}, we obtain certain upper bounds on the maximal cross-correlation of $\mathcal{P}_p$ and $\mathcal{PW}_p$
for some cases, respectively. Finally, we conclude this paper with some open problems in Section~\ref{sec-con}.

\section{Preliminaries}\label{sec-pre}

Let $n$ be a positive integer and $[n]$ denote the set $\{1, 2, \ldots, n\}$. Let $u, v$ be two integers, which typically denote the horizontal and the vertical shifts of a permutation matrix, respectively. A bijection $f$ from $[n]$ to $[n]$ is called a permutation of order $n$, denoted as $f= [f(i)]=[f(1), f(2), \ldots, f(n)]$. The corresponding permutation matrix is denoted by $A_f=(a_{ij})_{n \times n}$ where $a_{ij} = 1$ if and only if $f(j) = i$ and $a_{ij}=0$ otherwise. When no ambiguity arises, we omit subscript $f$. This permutation matrix can be regarded as an equivalent representation of the permutation $f$.

\begin{definition}
Let $f, g: [n] \rightarrow[n]$ be two permutations of order $n$ and let $A, B$ be their corresponding permutation matrices, respectively. The cross-correlation between $f$ and $g$ at $(u, v)$ is defined as
$$
\Psi_{A, B}(u, v) = |\{(f(i)+v, i+u): i \in[n]\} \cap\{(g(i), i): i \in[n]\}|.
$$
\end{definition}

In particular, the cross-correlation is the aperiodic auto-correlation when $f=g$. In the following, we give the formal definitions of Costas arrays through the property of the aperiodic auto-correlation.

\begin{definition}
Let $f= [f(i)] = [f(1), f(2), \ldots, f(n)]: [n] \rightarrow [n]$ be a permutation of order $n$, and let $A = (a_{ij})_{n \times n}$ be the corresponding permutation matrix. Then $A$ is a Costas array if the aperiodic auto-correlation of A satisfies
$\Psi_{A, A}(u, v) \leq 1$ for all $(u, v) \neq(0,0)$.
\end{definition}

\begin{example}
Consider the permutation $f = [3,2,6,4,5,1]$ and represent the permutation $f$ in the matrix form, denoted by $A$, as
\begin{equation*}
A =
\left[\begin{array}{llllll}
0 & 0 & 0 & 0 & 0 & 1 \\
0 & 1 & 0 & 0 & 0 & 0 \\
1 & 0 & 0 & 0 & 0 & 0 \\
0 & 0 & 0 & 1 & 0 & 0 \\
0 & 0 & 0 & 0 & 1 & 0 \\
0 & 0 & 1 & 0 & 0 & 0 \\
\end{array}
\right].
\end{equation*}

For $-5 \le u,v \le 5$, we calculate the auto-correlation value $\Psi_{A, A}(u, v)$ of $A$ at the shift $(u,v)$ relative to itself. The results are recorded in the matrix $\mathcal{C}_{A}$ as follows, with the $(0,0)$ shift positioned centrally.

\begin{equation*}
\mathcal{C}_{A} = \left[\begin{array}{lllllllllll}
0 & 0 & 0 & 0 & 0 & 0 & 0 & 1 & 0 & 0 & 0 \\
0 & 0 & 0 & 1 & 0 & 0 & 1 & 0 & 0 & 0 & 0 \\
0 & 0 & 1 & 0 & 1 & 0 & 0 & 0 & 0 & 0 & 1 \\
0 & 0 & 1 & 1 & 0 & 0 & 1 & 0 & 1 & 0 & 0 \\
0 & 1 & 0 & 0 & 1 & 0 & 1 & 1 & 0 & 1 & 0 \\
0 & 0 & 0 & 0 & 0 & 6 & 0 & 0 & 0 & 0 & 0 \\
0 & 1 & 0 & 1 & 1 & 0 & 1 & 0 & 0 & 1 & 0 \\
0 & 0 & 1 & 0 & 1 & 0 & 0 & 1 & 1 & 0 & 0 \\
1 & 0 & 0 & 0 & 0 & 0 & 1 & 0 & 1 & 0 & 0 \\
0 & 0 & 0 & 0 & 1 & 0 & 0 & 1 & 0 & 0 & 0 \\
0 & 0 & 0 & 1 & 0 & 0 & 0 & 0 & 0 & 0 & 0 \\
\end{array}
\right].
\end{equation*}

We can easily observe that the values of auto-correlation for all shifts $(u,v)$ are no greater than $1$ except for the trivial case $(u, v)=(0,0)$, which implies that the matrix $A$ is a Costas array of order $6$.
\end{example}

The definition above of Costas arrays is via aperiodic auto-correlation and the equivalent two definitions of Costas arrays via distance vectors and difference triangles can be found for example in~\cite{GT1984Constructions, Golomb1984Algebraic}.

Below we recall the classical algebraic construction of Costas arrays by Welch~\cite{Golomb1984Algebraic}, and the other classical Lempel-Golomb construction can be found in~\cite{Golomb1984Algebraic}.

\begin{construction}\label{const-exp-welch}\cite{GT1984Constructions}
  {\em [Exponential Welch Construction]}
Denote by $\mathbb{F}_p$ the finite field of order $p$, where $p>2$ is a prime. Let $\alpha$ be a primitive element of $\mathbb{F}_p$ and $c$ be an integer with $0 \leq c \leq p-2$.
The $(p-1) \times( p-1)$ permutation matrix, denoted by $W_1^{\exp}(p, \alpha, c) = (a_{ij})_{(p-1) \times (p-1)}$, where $a_{i j}=1$ if and only if $i \equiv \alpha^{j-1+c}( \bmod \, p)$, for $1 \leq i, j \leq p-1$, is a Costas array.
\end{construction}

The following logarithmic Welch construction is the inverse permutation of the exponential Welch construction.

\begin{construction}\label{const-log-welch}{\em [Logarithmic Welch Construction]}
Denote by $\mathbb{F}_p$ the finite field of order $p$, where $p>2$ is a prime. Let $\alpha$ be a primitive element of $\mathbb{F}_p$ and $c$ be an integer with $0 \leq c \leq p-2$.
The $(p-1)\times (p-1)$ permutation matrix, denoted by $W_1^{\log}(p, \alpha, c) = (a_{ij})_{(p-1) \times (p-1)}$, where $a_{ij} = 1$ if and only if $i \equiv 1+c+\log _\alpha j \bmod (p-1)$, for $1 \leq i, j\leq p-1$, is a Costas array.
\end{construction}

In a Costas array, the points on the main diagonal are called {\em fixed points}. The number of fixed points plays an important role in understanding the structure of the algebraically constructed Costas arrays~\cite{Drakakis2011Open,JW2014Deficiency,JY2018cubes}. In fact, the fixed points in a Costas array form a {\em Golomb ruler} (or called {\em Sidon set}, {\em $B_2$-sequence}). Formally, a {\em Golomb ruler} is a set of distinct integers such that all differences between any two integers are different. Therefore, the number of the fixed points in Costas arrays of order $n$ is naturally bounded by the size of the Golomb ruler in the set $[n]$. Cilleruelo~\cite{Cilleruelo2010sidon} gave the following upper bound on the size of Golomb rulers while studying the size of Sidon sets.

\begin{lemma}\cite[Corollary 2]{Cilleruelo2010sidon}
\label{size_sidon}
Let $F(n)$ denote the maximal cardinality of a Sidon set in $[n]$, then
$$
F(n) < n^{\frac{1}{2}}+n^{\frac{1}{4}}+ \frac{1}{2}.
$$
\end{lemma}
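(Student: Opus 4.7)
The plan is the classical Lindström-type Cauchy--Schwarz argument, tightened to pin the additive constant at $\tfrac{1}{2}$. Let $A = \{a_1 < a_2 < \cdots < a_k\} \subseteq [n]$ be a Sidon set of maximum size $k = F(n)$, and fix a window length $h \ge 1$ to be chosen at the end. I introduce the convolution count
$$
  f(x) \;=\; \bigl|\{(a,i) : a \in A,\ 0 \le i \le h-1,\ a+i = x\}\bigr|,
$$
whose support lies in $[a_1,\, a_k + h - 1]$, giving at most $n + h - 1$ nonzero terms. The first moment is immediate, $\sum_x f(x) = kh$, and Cauchy--Schwarz yields
$$
  (kh)^2 \;\le\; (n + h - 1) \sum_x f(x)^2.
$$

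The key ingredient is that the Sidon property controls the second moment. Indeed, $\sum_x f(x)^2$ counts quadruples $(a, a', i, i') \in A^2 \times \{0, \ldots, h-1\}^2$ with $a - a' = i' - i$, and hence with $|a - a'| \le h - 1$. The Sidon property says that each nonzero value $d$ of $a - a'$ is attained by at most one ordered pair $(a, a')$; for each such $d$ with $1 \le |d| \le h - 1$, the number of $(i, i')$ pairs with $i' - i = d$ equals $h - |d|$. Combined with the $kh$ diagonal contributions,
$$
  \sum_x f(x)^2 \;\le\; kh + 2\sum_{d=1}^{h-1}(h-d) \;=\; kh + h(h-1).
$$
Substituting into the Cauchy--Schwarz bound and dividing by $h$ gives the quadratic inequality $k^2 h \le (n + h - 1)(k + h - 1)$.

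The remaining task is to optimize $h$ near $n^{1/2}$. A crude choice already delivers $k \le n^{1/2} + n^{1/4} + O(1)$, which is essentially Lindström's classical bound; the sharpening to $n^{1/2} + n^{1/4} + \tfrac{1}{2}$ requires more care. I would pick $h$ close to the integer that minimizes the right-hand side, exploiting that the true support length is $a_k - a_1 + h \le n - 1 + h$ (strictly smaller than $n + h$) and that the second-moment bound $kh + h(h-1)$ is strictly less than $h(k + h)$. Solving the resulting quadratic in $k$ and handling the floor/ceiling bookkeeping delivers the stated strict inequality. The main obstacle is exactly this final tightening: the skeleton of the argument is elementary, but each loose term along the way costs $\Theta(1)$ in the additive constant, so the novelty in \cite{Cilleruelo2010sidon} lies in the careful algebraic optimization rather than in the overall strategy.
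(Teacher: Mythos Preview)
The paper does not supply a proof of this lemma at all: it is quoted verbatim as \cite[Corollary 2]{Cilleruelo2010sidon} and used as a black box in the proof of Lemma~\ref{lem_PWp}. So there is no ``paper's own proof'' to compare against.

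As for your proposal on its own merits: the skeleton you wrote is exactly Lindstr\"om's classical window-counting argument, and everything up through the inequality $k^2 h \le (n+h-1)(k+h-1)$ is correct. However, you explicitly stop short of the actual content of the lemma. The whole point of Cilleruelo's Corollary~2 is the constant $\tfrac{1}{2}$ rather than the $1$ that Lindstr\"om obtained, and you yourself say that ``the sharpening to $n^{1/2}+n^{1/4}+\tfrac{1}{2}$ requires more care'' and then defer to \cite{Cilleruelo2010sidon} for that care. That is not a proof; it is a citation with commentary. If you intend to prove the lemma rather than quote it, you must carry out the optimization in $h$ and the quadratic-in-$k$ estimate in full, because the improvement from $1$ to $\tfrac{1}{2}$ does not fall out of ``floor/ceiling bookkeeping'' alone---Cilleruelo's argument uses a somewhat different averaging that genuinely tightens the inequality. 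As written, your proposal establishes only $F(n) < n^{1/2}+n^{1/4}+1$, which is weaker than the stated lemma (though, for the application in Lemma~\ref{lem_PWp} and Theorem~\ref{thm_PWp}, that weaker bound would in fact suffice).
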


Note that the cross-correlation between a Costas array constructed by the exponential Welch construction (Construction~\ref{const-exp-welch}) and its cyclically shifted version may contain values between $1$ and $p-2$, where the cyclically shifted version means a Costas array by the exponential Welch construction using the same primitive element but a different constant $c$. The family of exponential Welch Costas arrays, denoted by $\mathcal{W}_p$, does not contain cyclically shifted Costas arrays. In other words, the constant $c$ is fixed and set to $0$ for convenience to generate a family of Costas arrays by Construction~\ref{const-exp-welch}. In a similar way, the family of logarithmic Welch Costas arrays, denoted by $\mathcal{W}_p^l$, is also defined in the following.

%The family of exponential Welch Costas arrays contains non cyclic shifts, as the cross-correlation between an exponential Welch array and a cyclic shift of itself in $\mathbb{F}_p$ can give rise to any value between 1 and $p-2$, where the situation is expected to be excluded. In other words, the constant $c$ is fixed and is always chosen as $0$ for convenience. The other families of exponential Welch Costas arrays could be obtained just via shifting. Now, we define the family of exponential Welch Costas arrays $\mathcal{W}_p$ and the family of logarithmic Welch Costas arrays $\mathcal{W}^{l}_p$.

\begin{construction}\label{const-wp}
For prime $p \geq 5$, the family of exponential Welch Costas arrays of order $(p-1)$ that does not contain any cyclic shift of a given exponential Welch array $W_1^{\exp}(p, \alpha, 0)$, is defined as
$$
\mathcal{W}_p=\left\{\left[\alpha^{i-1} \bmod p\right]: 1 \leq i \leq p-1 \text { and } \alpha \text { is a primitive element of } \mathbb{F}_p\right\}.
$$
Note that $\left|\mathcal{W}_p\right|=\phi(p-1)$, where $\phi$ is the Euler's totient function.
\end{construction}

\begin{construction}\label{const-wpl}
For prime $p \geq 5$, the family of logarithmic Welch Costas arrays of order $(p-1)$ that does not contain any cyclic shift of a given logarithmic Welch array $W_1^{\log}(p, \alpha, 0)$, is defined as
$$
\mathcal{W}^{l}_p=\left\{\left[ (1+\log _\alpha i) \bmod (p-1) \right]: 1 \leq i \leq p-1 \text { and } \alpha \text { is a primitive element of } \mathbb{F}_p\right\}.
$$
Clearly, the family $\mathcal{W}_p^l$ has the same size $\phi(p-1)$ as the family $\mathcal{W}_p$.
\end{construction}

To derive theoretical bounds on the maximal cross-correlation of Costas arrays, the following classical Weil bound for multiplicative character sums is very important.

\begin{lemma}\cite[Theorem 5.41]{LN1997finitefield}\label{lem-weilbound}
Let $\mathbb{F}_p$ be a finite field with $p$ elements and let $\psi$ be a multiplicative character of $\mathbb{F}_p$ of order $m>1$. Suppose that $f \in \mathbb{F}_p[x]$ has precisely $d$ distinct roots in its splitting field over $\mathbb{F}_p$, and suppose that $f$ is not an m-th power of a polynomial. Then for every $a \in \mathbb{F}_p$, we have
$$
\left|\sum_{x \in \mathbb{F}_p} \psi(a f(x))\right| \leq(d-1) \sqrt{p}.
$$

\end{lemma}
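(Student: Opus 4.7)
The plan is to establish this character sum estimate by invoking the Riemann hypothesis for curves over finite fields (Weil's theorem) through the L-function attached to $\psi$ and $f$. First, I would associate to the pair $(\psi, f)$ the Kummer (superelliptic) curve $C: y^m = af(x)$ over $\mathbb{F}_p$, and package the values $\psi(af(x))$ into the zeta/L-function framework so that the target character sum becomes, up to sign, the trace of Frobenius on a suitable cohomological piece.

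Concretely, I would count the affine $\mathbb{F}_{p^r}$-points of $C$ for each $r \geq 1$ and write
$$ \#C(\mathbb{F}_{p^r})_{\mathrm{aff}} = \sum_{x \in \mathbb{F}_{p^r}} \sum_{j=0}^{m-1} \psi_r^{\,j}(af(x)), $$
where $\psi_r = \psi \circ N_{\mathbb{F}_{p^r}/\mathbb{F}_p}$, and then separate the trivial character $j=0$ (which contributes $p^r$) from the nontrivial isotypic components. The hypothesis that $f$ is not an $m$-th power in $\mathbb{F}_p[x]$ is used precisely to ensure that each nontrivial L-function $L(T, \psi^j)$ is a genuine polynomial and that the associated cover is geometrically irreducible; combined with the fact that $f$ has exactly $d$ distinct roots in its splitting field (which controls the ramification of the covering $C \to \mathbb{A}^1_x$), this forces $\deg L(T,\psi) \leq d-1$.

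The main obstacle, and the actual content of the theorem I would be borrowing, is Weil's theorem asserting that every reciprocal root $\omega_i$ of $L(T,\psi)$ satisfies $|\omega_i| = \sqrt{p}$. Granting it, writing $L(T,\psi) = \prod_{i=1}^{d-1}(1-\omega_i T)$ and extracting the coefficient of $T$ in $\log L(T,\psi)$ yields the identity
$$ \sum_{x \in \mathbb{F}_p} \psi(af(x)) = -\sum_{i=1}^{d-1} \omega_i, $$
and the triangle inequality gives the claimed bound $(d-1)\sqrt{p}$ at once. If I wished to avoid algebraic geometry entirely, the alternative route would be Stepanov's method: construct an auxiliary polynomial in $x$ and $y$ of carefully controlled degree that vanishes to high order on the set of points where $\psi(af(x))$ takes a prescribed value, then play a lower bound on the multiplicity-weighted zero count against an upper bound on the degree. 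This is more elementary but considerably more delicate in the degree bookkeeping, so I expect the clean zeta-function route via Weil to be the intended proof.
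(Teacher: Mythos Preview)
The paper does not prove this lemma at all: it is quoted verbatim as \cite[Theorem~5.41]{LN1997finitefield} and used as a black box (in fact, it is only invoked indirectly through the cited bound of G\'omez-P\'erez and Winterhof). There is therefore no ``paper's own proof'' to compare against.

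Your outline is a faithful high-level description of the standard proof, and it is essentially the route taken in Lidl--Niederreiter: package the character sum into the L-function $L(T,\psi)$, show that the hypothesis ``$f$ is not an $m$-th power'' forces $L(T,\psi)$ to be a polynomial of degree at most $d-1$, and then appeal to the fact that its reciprocal roots all have absolute value $\sqrt{p}$. One small caution on your write-up: the identity you display, $\sum_{x}\psi(af(x)) = -\sum_i \omega_i$, and the clean degree bound $d-1$ both require a little care about what happens at infinity (one must track the contribution of the point(s) over $x=\infty$ to the zeta function of the cover, or equivalently check whether $\psi$ is ramified there); Lidl--Niederreiter handle this by a direct Stepanov--Schmidt argument rather than by the geometric language of covers, which sidesteps that bookkeeping. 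Either way, your sketch is correct in spirit and there is no gap to flag.
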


The Weil bound above becomes trivial in some cases, for example, when the number of distinct roots $d$ is large. Kelley and Owen~\cite{KO2017trinomials} gave the following upper bound on the number of roots of trinomials over finite fields, which will be used in deriving bounds on the maximal cross-correlation of Costas arrays afterwards.

%whose method is elementary but interesting.  Furthermore, we highlight that the findings by Kelley and Owen could be utilized to establish a comparable upper limit for the maximal cross-correlation of the exponential Welch Costas array family $\mathcal{W}_p$, akin to the results of G贸mez-P茅rez and Winterhof, but without relying on the Weil bound.

\begin{lemma}\cite[Theorem 1.2]{KO2017trinomials}
\label{lem-kelley}
For a trinomial
$$
f(x)=x^n+a x^s+b \in \mathbb{F}_p[x],
$$
with $a$ and $b$ both nonzero, the number of roots $R(f) \leq \delta \left\lfloor\frac{1}{2}+\sqrt{\frac{p-1}{\delta}}\right\rfloor$, where $\delta=\operatorname{gcd}(n, s, p-1)$, and $\lfloor \cdot \rfloor$ denotes the floor function.
\end{lemma}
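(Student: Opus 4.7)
The plan is to reduce the counting to a multiplicative Sidon-set bound inside a subgroup of $\mathbb{F}_p^*$ by exploiting the action of the $\delta$-th roots of unity on the root set. Because $b \neq 0$, the zero element is not a root, so the root set $S$ lies in $\mathbb{F}_p^*$. Since $\delta \mid p-1$, the group $\mu_\delta \subseteq \mathbb{F}_p^*$ of $\delta$-th roots of unity has exactly $\delta$ elements, and since $\delta \mid n$ and $\delta \mid s$, every $\zeta \in \mu_\delta$ satisfies $f(\zeta x) = \zeta^n x^n + a \zeta^s x^s + b = f(x)$. Hence $\mu_\delta$ acts freely on $S$, so $\delta \mid |S|$. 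Letting $H$ denote the unique subgroup of $\mathbb{F}_p^*$ of order $q := (p-1)/\delta$ (the set of $\delta$-th powers), the map $\psi(x) = x^\delta$ descends to an injection $\bar\psi : S/\mu_\delta \hookrightarrow H$, and $R(f) = |S| = \delta \cdot |\bar\psi(S/\mu_\delta)|$.

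The main step is to show that $T := \bar\psi(S/\mu_\delta)$ is a multiplicative Sidon set in the cyclic group $H$. Suppose $x_1, x_2, x_3, x_4 \in S$ satisfy $(x_1 x_2)^\delta = (x_3 x_4)^\delta$; then $(x_1 x_2)/(x_3 x_4) \in \mu_\delta$, so both $(x_1 x_2)^n = (x_3 x_4)^n$ and $(x_1 x_2)^s = (x_3 x_4)^s$. Multiplying the trinomial relations $x_i^n = -a x_i^s - b$ for $i = 1, 2$ gives
\[
(x_1 x_2)^n = a^2 (x_1 x_2)^s + a b (x_1^s + x_2^s) + b^2,
\]
and similarly for $i = 3, 4$. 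Equating and cancelling (using $ab \neq 0$) yields $x_1^s + x_2^s = x_3^s + x_4^s$; combined with $x_1^s x_2^s = x_3^s x_4^s$, this forces $\{x_1^s, x_2^s\} = \{x_3^s, x_4^s\}$ as multisets. After relabelling, $x_1^s = x_3^s$, and the trinomial identity then forces $x_1^n = x_3^n$, so $(x_1/x_3)^{\gcd(n, s, p-1)} = 1$, i.e., $x_1/x_3 \in \mu_\delta$ and $\bar\psi(x_1) = \bar\psi(x_3)$. Similarly $\bar\psi(x_2) = \bar\psi(x_4)$, verifying the Sidon property of $T$.

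For a multiplicative Sidon set $T$ in a cyclic group of order $q$, the ratios $t/t'$ for distinct $t, t' \in T$ are pairwise distinct nonidentity elements, so $|T|(|T|-1) \leq q - 1$, giving $|T| < \tfrac{1}{2} + \sqrt{q}$ and hence $|T| \leq \lfloor \tfrac{1}{2} + \sqrt{q} \rfloor$. Combining with $R(f) = \delta |T|$ yields the desired bound. The main obstacle I anticipate is the extraction of the multiset identity $\{x_1^s, x_2^s\} = \{x_3^s, x_4^s\}$ from the trinomial relations: this is precisely where the three-term structure of $f$ is essential, as the product of two root-relations collapses into a symmetric expression in $(x_1^s, x_2^s)$ whose coefficients depend only on $(x_1 x_2)^s$. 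Once this symmetric-function step is in hand, the $\mu_\delta$-descent and the classical Sidon-set bound are largely routine.
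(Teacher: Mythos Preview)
The paper does not prove this lemma; it is quoted verbatim from Kelley and Owen \cite[Theorem~1.2]{KO2017trinomials} and used as a black box. Your argument is correct and is essentially the Kelley--Owen proof: the free $\mu_\delta$-action on the root set reduces the count to $|T|$ where $T$ is the image of the roots under $x \mapsto x^\delta$ inside the index-$\delta$ subgroup $H$ of order $q=(p-1)/\delta$, the product-of-relations identity forces $T$ to be a multiplicative Sidon set in $H$, and the standard Sidon bound $|T|(|T|-1)\le q-1$ yields $|T|\le\lfloor \tfrac12+\sqrt{q}\rfloor$.
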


The family of power permutations $\mathcal{P}_p$, studied by Ardalani~\cite{Ardalani2023Contributions}, is defined as follows.

\begin{construction}\label{const-pp}
For prime $p \geq 5$, the family $\mathcal{P}_p$ of power permutations of order $(p-1)$, is defined as
$$
\mathcal{P}_p=\left\{\left[i^d \bmod p\right]: 1 \leq i \leq p-1 \text { where } \operatorname{gcd}(d, p-1)=1 \text{ and } 1 < d \le p-2\right\}.
$$
Note that $\left|\mathcal{P}_p\right|=\phi(p-1)-1$ where the subtraction of $1$ corresponds to excluding the identity permutation $d=1$ and $\phi$ denotes Euler's totient function.
\end{construction}

\section{The family of Costas arrays $\mathcal{W}^{el}_p$ and its cross-correlation}
\label{sec-wp}

The main goal of this section is to extend the family of exponential Welch Costas arrays $\mathcal{W}_p$ by including the family of logarithmic Welch Costas arrays and study the cross-correlation properties of this larger family since the logarithmic Welch Costas arrays by Construction~\ref{const-log-welch} are distinct from those by Construction~\ref{const-exp-welch} for $p > 5$~\cite{DG2009Symmetry}. To best of our knowledge, this new family has not been investigated before. Note that the size of this family is twice as large as the family $\mathcal{W}_p$ which has been widely studied.

\begin{construction}\label{const-wpel}
For $p > 5$, the family $\mathcal{W}^{el}_p$ of Costas arrays of order $(p-1)$ is defined as
$$
\mathcal{W}^{el}_p=\mathcal{W}_p \bigcup \mathcal{W}^{l}_p ,
$$
where $\mathcal{W}_p$ and $\mathcal{W}_p^l$ are defined in Construction~\ref{const-wp} and Construction~\ref{const-wpl}, respectively. Note that $|\mathcal{W}^{el}_p|=2\phi(p-1)$, where $\phi$ denotes Euler's totient function.
\end{construction}

As shown in Table~\ref{cross_welch}, we list exhaustively the maximal cross-correlation values of two Costas arrays in the family $\mathcal{W}_p^{el}$, i.e., $\mathcal{C}(\mathcal{W}^{el}_p) = \max \limits_{(u, v)} \max \limits_{f \neq g} \Psi_{f, g}(u, v)$, where $f, g \in \mathcal{W}_p^{e l}$ and $0 \leq |u|, |v| \leq p-2$, for $7 \leq p \leq 277$. The maximal cross-correlation values of the family $\mathcal{W}_p$ are also included in Table~\ref{cross_welch} for comparison, denoted by $\mathcal{C}(\mathcal{W}_p)$, which can also be found in~\cite{DG2011Maximal}.

\begin{table}[h]
\centering
\resizebox{\textwidth}{!}{
\begin{tabular}{|ccc|ccc|ccc|ccc|}
\hline
$p$ & $\mathcal{C}(\mathcal{W}_p)$ & $\mathcal{C}(\mathcal{W}^{el}_p)$ & $p$ & $\mathcal{C}(\mathcal{W}_p)$ & $\mathcal{C}(\mathcal{W}^{el}_p)$ & $p$ & $\mathcal{C}(\mathcal{W}_p)$ & $\mathcal{C}(\mathcal{W}^{el}_p)$ & $p$ & $\mathcal{C}(\mathcal{W}_p)$ & $\mathcal{C}(\mathcal{W}^{el}_p)$ \\
\hline
\underline{7} & \underline{2} & \underline{3} & 61  & 30 & 30 & 131 & 26 & 26 & 199 & 66  & 66  \\
\underline{11} & \underline{3} & \underline{5} & 67  & 22 & 22 & 137 & 68 & 68 & 211 & 70  & 70  \\
13 & 6  & 6  & 71  & 14 & 14 & 139 & 46 & 46 & 223 & 74  & 74  \\
17 & 8  & 8  & 73  & 36 & 36 & 149 & 74 & 74 & \underline{227} & \underline{6} & \underline{11} \\
19 & 6  & 6  & 79  & 26 & 26 & 151 & 50 & 50 & 229 & 114 & 114 \\
\underline{23} & \underline{4} & \underline{7} & \underline{83} & \underline{5} & \underline{8} & 157 & 78 & 78 & 233 & 116 & 116 \\
29 & 14 & 14 & 89  & 44 & 44 & 163 & 54 & 54 & 239 & 34  & 34  \\
31 & 10 & 10 & 97  & 48 & 48 & \underline{167} & \underline{6} & \underline{11} & 241 & 120 & 120 \\
37 & 18 & 18 & 101 & 50 & 50 & 173 & 86 & 86 & 251 & 50  & 50  \\
41 & 20 & 20 & 103 & 34 & 34 & \underline{179} & \underline{6} & \underline{10} & 257 & 128 & 128 \\
43 & 14 & 14 & \underline{107} & \underline{5} & \underline{8} & 181 & 90 & 90 & \underline{263} & \underline{7} & \underline{11} \\
\underline{47} & \underline{5} & \underline{9} & 109 & 54 & 54 & 191 & 38 & 38 & 269 & 134 & 134 \\
53 & 26 & 26 & 113 & 56 & 56 & 193 & 96 & 96 & 271 & 90  & 90  \\
\underline{59} & \underline{5} & \underline{9} & 127 & 42 & 42 & 197 & 98 & 98 &  277   &   138  &   138 \\
\hline
\end{tabular}
}
\caption{Maximal cross-correlation of $\mathcal{W}_p$ and $\mathcal{W}^{el}_p$}
\label{cross_welch}
\end{table}

Recall that the maximal cross-correlation of $\mathcal{W}_p$, for arbitrary $u$ and $v = 0$, is equal to $\frac{p-1}{t}$, where $t$ is the smallest prime divisor of $\frac{p-1}{2}$ and, for arbitrary $u$ and $v \neq 0$, is bounded by $1+\left\lfloor\left(1-\frac{2}{p-1}\right) p^{1 / 2}\right\rfloor$. It then follows that when $p$ is not a safe prime, $\mathcal{C}(\mathcal{W}_p) = \frac{p-1}{t}$, since $1+\left\lfloor\left(1-\frac{2}{p-1}\right) p^{1 / 2}\right\rfloor \leq \frac{p-1}{t}$ and when $p$ is a safe prime, $\mathcal{C}(\mathcal{W}_p) \leq 1+\left\lfloor\left(1-\frac{2}{p-1}\right) p^{1 / 2}\right\rfloor$, since $\frac{p-1}{t} = 2$. Here, a prime $p$ is a safe prime if and only if $p=2r+1$ and $r$ is also a prime (also known as {\em Sophie Germain prime}).

It is observed that there are also two cases on the maximal cross-correlation of the family $\mathcal{W}_p^{el}$:
(1) When $p$ is not a safe prime, see $p$ values with no underline in Table~\ref{cross_welch}, $\mathcal{C}(\mathcal{W}^{el}_p) = \mathcal{C}(\mathcal{W}_p) = \frac{p-1}{t}$, where $t$ is the least prime divisor of $\frac{p-1}{2}$. For example, let $p=103$ and factorize $p-1$ to obtain $103 - 1 = 2 \times 3 \times 17$, writing the factors in the ascending order. The least prime divisor of $\frac{p-1}{2}$ is equal to $3$, then $\frac{p-1}{t} = \frac{102}{3} = 34$, corresponding to the result in the column $\mathcal{C}(\mathcal{W}^{el}_p)$; (2) When $p$ is a safe prime, see underline $p$ values in Table~\ref{cross_welch}, $\mathcal{C}(\mathcal{W}^{el}_p) \ne \mathcal{C}(\mathcal{W}_p)$ and $\mathcal{C}(\mathcal{W}^{el}_p), \mathcal{C}(\mathcal{W}_p)$ are both small compared to the order of the arrays.

To prove some of the above observations, we need to calculate the cross-correlation between any two Costas arrays $f,g$ in $\mathcal{W}^{el}_p$ at the horizontal shift $u$ and the vertical shift $v$. There are three cases where $f$ and $g$ are both exponential Welch Costas arrays, both logarithmic Welch Costas arrays or one is an exponential Welch Costas array while the other is a logarithmic Welch Costas array.
Since the maximal cross-correlation between either two exponential Welch Costas arrays or two logarithmic Welch Costas arrays already have settled bounds~\cite{GW2020Note}, the remaining question is to compute the cross-correlation between an exponential Costas array and a logarithmic one at the shift $(u,v)$.

Consider an exponential Welch Costas array $f$ and a logarithmic Welch Costas array $g$, generated by primitive elements $\alpha$ and $\beta$, respectively. Without loss of generality, let the coordinates of $f$ be $(i_1,j_1)$ where $i_1 \equiv \alpha^{j_1-1}( \bmod \, p)$ and those of $g$ be $(i_2,j_2)$ where $i_2 \equiv 1+\log _\beta j_2 \bmod (p-1)$. It is equivalent to represent the coordinates of $g$ as $(j_2,i_2)$ where $i_2 \equiv \beta^{j_2-1}( \bmod \, p)$. In order to derive generic equations for calculating the cross-correlation between the two Costas arrays at $(u,v)$, we need to solve the following system:

\begin{equation}
\label{eq_system}
\left\{
\begin{aligned}
& j_2=i_1+v , \\
& i_2=j_1+u . \\
\end{aligned}
\right.
\end{equation}

By substituting $i_1 \equiv \alpha^{j_1-1}( \bmod \, p)$ and $i_2 \equiv \beta^{j_2-1}( \bmod \, p)$ into Eq.~(\ref{eq_system}), we have
\begin{equation*}
\left\{
\begin{aligned}
& j_2=\alpha^{j_1-1}( \bmod \, p)+v ,  \\
& \beta^{j_2-1}( \bmod \, p)=j_1+u . \\
\end{aligned}
\right.
\end{equation*}

Let $x=j_1$ and $y=j_2$, and the value of cross-correlation $\Psi_{f, g}(u, v)$ is then equal to the number of solution pairs $(x,y)$ satisfying the following equations:

\begin{equation}
\label{equation_of_Wp^el}
 \left\{
\begin{aligned}
& y = \alpha^{x-1} \bmod p + v , \\
& \beta^{y-1} \bmod p = x + u . \\
\end{aligned}
\right.
\end{equation}

%It is worthy noting that the number of solution pairs $(x,y)$ is exact the number of $2$-cycles of the exponential Welch Costas array $W_1^{\exp}(p, \alpha, 0)$ when $(u,v)=(0,0)$.

It seems hard to directly determine the number of solutions of Eq.~(\ref{equation_of_Wp^el}) because this is a transcendental equation system over a finite field. Next, we give an upper bound of the number of solutions of Eq.~(\ref{equation_of_Wp^el}) for arbitrary $u$ and $v$, using the idea from~\cite{GS2010Cycles} that simplifies the equation system into a polynomial equations.

\begin{lemma}
\label{lemma-Wpel}
Let $f$ be an exponential Welch Costas array $W_1^{\exp}(p, \alpha, 0)$ and $g$ be a logarithmic Welch Costas array $W_1^{\log}(p, \beta, 0)$ over $\mathbb{F}_p$ with prime $p$ and $\alpha, \beta \in \{1,2,\cdots,p-1\}$ two primitive elements. Then for arbitrary $u$ and $v$, we have
$$
\Psi_{f, g}(u,v) \leq 4 p \, \log_p \alpha  + 1,
$$
where $\log_p \alpha$ is the logarithmic function on real numbers.
\end{lemma}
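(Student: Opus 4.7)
The plan is to reduce the counting problem to root counts for polynomial congruences over $\mathbb{F}_p$, using the polynomialization idea of~\cite{GS2010Cycles} to handle the transcendental form of~(\ref{equation_of_Wp^el}).

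First I would eliminate $y$ by substituting $y = (\alpha^{x-1}\bmod p) + v$ from the first equation into the second, producing the univariate condition
$$\beta^{(\alpha^{x-1}\bmod p)+v-1}\equiv x+u\pmod p, \qquad x\in\{1,\ldots,p-1\}.$$
Writing $\beta = \alpha^k$ with $\gcd(k,p-1)=1$, everything is expressed in $\alpha$ alone. The difficulty is that $x$ enters both the inner exponent $x-1$ and, via the integer reduction $\alpha^{x-1}\bmod p$, again in the outer exponent of $\beta^{\cdot}$, so no polynomial structure in $x$ is manifest.

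To polynomialize the congruence, I would stratify the set $\{1,\ldots,p-1\}$ of $x$-values by the integer quotient $q=q(x):=\lfloor \alpha^{x-1}/p\rfloor$, equivalently by the base-$p$ length $\ell=\lfloor(x-1)\log_p\alpha\rfloor$ of the integer $\alpha^{x-1}$. On a fixed stratum the identity $\alpha^{x-1}\bmod p = \alpha^{x-1}-qp$ is exact in $\mathbb Z$, and Fermat's little theorem turns $\alpha^{-kqp}$ into $\alpha^{-kq}\pmod p$, so the condition becomes a genuine polynomial relation in the integer $\alpha^{x-1}$ modulo $p$. The aim is then to bound the number of roots on each stratum by a small absolute constant, producing the factor $4$ in the target bound.

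Finally I would count the strata. Since $\alpha^{x-1}$ ranges up to $\alpha^{p-2}$, whose base-$p$ length is at most $(p-2)\log_p\alpha + 1$, the number of nonempty strata is bounded by roughly $p\log_p\alpha$. Multiplying the per-stratum root bound of $4$ by this count, and separately accounting for the degenerate stratum $q=0$ (where $\alpha^{x-1}<p$ is simply a small integer) to absorb the additive $+1$, yields the claimed bound $4p\log_p\alpha + 1$. The main obstacle will be the per-stratum root count: one must verify that, independently of $q$, the polynomial congruence arising after the Fermat substitution has a uniformly bounded number of roots in $\mathbb{F}_p$, since its coefficients and degree depend on the stratum, and this is where the algebraic content of the polynomialization genuinely pays off.
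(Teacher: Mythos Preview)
Your stratification does not work as stated, and the gap is structural rather than cosmetic. You claim that stratifying by $q(x)=\lfloor \alpha^{x-1}/p\rfloor$ is ``equivalently'' stratifying by the base-$p$ length $\ell=\lfloor (x-1)\log_p\alpha\rfloor$, but these are very different partitions. The number of distinct values of $q$ is about $\alpha^{p-2}/p$, which is exponential in $p$; only the identity $\alpha^{x-1}\bmod p=\alpha^{x-1}-qp$ requires $q$ to be fixed, so this is the stratification you actually need, and it has far too many strata. Stratifying by $\ell$ instead gives the right count $\approx p\log_p\alpha$, but on a fixed-$\ell$ stratum $q$ still varies over roughly $p^{\ell}-p^{\ell-1}$ values, so no single polynomial relation emerges. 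Moreover, even on a genuine fixed-$q$ stratum the resulting condition $\beta^{\alpha^{x-1}-qp+v-1}\equiv x+u\pmod p$ is not polynomial in any useful variable: setting $w=\alpha^{x-1}$ leaves $\beta^{w}$ exponential in $w$ and $x=1+\log_\alpha w$ transcendental, so your ``per-stratum root bound of $4$'' has no algebraic mechanism behind it.

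The paper's use of~\cite{GS2010Cycles} is different in kind. It does not stratify single solutions; it considers \emph{pairs} of consecutive solutions $x_i<x_{i+1}$ and splits according to the gap $a=x_{i+1}-x_i$. Large gaps $a\ge z$ can occur at most $p/z+1$ times by pigeonhole. For each small gap $a<z$, the two copies of~(\ref{equation_of_Wp^el}) combine via $\alpha^{x_{i+1}-1}=\alpha^{a}\alpha^{x_i-1}$ to give $y_{i+1}-v=\alpha^a(y_i-v)\bmod p$; the integer quotient $k=\lfloor \alpha^a(y_i-v)/p\rfloor$ here is bounded by $\alpha^a$ (not by anything exponential), and after Fermat the substitution $t=\beta^{y_i-v}$ yields the genuine trinomial $\beta^{-k}t^{\alpha^a}\equiv t+\beta^{-v+1}a\pmod p$. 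Kelley--Owen then bounds each such trinomial's root count, and summing over $a<z$ and the $\alpha^a$ choices of $k$, followed by optimizing $z=\tfrac12\log_\alpha p$, gives the stated bound. The essential point you are missing is that polynomialization is achieved by differencing two solutions (so the unknown quotient is controlled by $\alpha^a$ with $a$ small), not by fixing a quotient attached to a single solution.
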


\begin{proof}
Suppose that $(x_1,y_1),(x_2,y_2),\cdots,(x_N,y_N)$ are the solutions of Eq.~(\ref{equation_of_Wp^el}) where $0 \le x_i,y_i \le p-1$ for $i=1,2,\cdots,N$. Then, for $i=1,2,\cdots,N$, we have
\begin{equation*}
\left\{
\begin{aligned}
& y_i = \alpha^{x_i-1} \bmod p + v , \\
& \beta^{y_i-1} \bmod p = x_i + u . \\
\end{aligned}
\right.
\end{equation*}

Without loss of generality, assume that $x_1 < x_2 < \cdots < x_N$.

%Since we are supposed to give an upper bound of the number of solution pairs $N$, the idea of the proof is that we count the number of $x_i$ satisfying $x_{i+1}-x_i=a$ where $a$ is an integer. Ensuring that each $x_i$ is counted at least once, we give an upper bound of $N$.

Choose a positive parameter $z<p$ to be optimized later and consider the number of indices $i \in\{1,2, \ldots, N-1\}$ satisfying $x_{i+1}-x_i \geq z$. Including the last index $i=N$, this number, denoted by $I_0$, is at most $\frac{p}{z}+1$.

For each positive integer $a<z$, let $I_a$ denote the set of indices $i \in\{1,2, \ldots, N-1\}$ such that $x_{i+1}-x_i=a$. The values $x_i$ and $x_{i+1}$ that satisfy this condition must also fulfill
\begin{equation*}
\left\{
\begin{aligned}
& \beta^{y_i-1} \bmod p = x_i + u, \\
& \beta^{y_{i+1}-1} \bmod p = x_{i+1} + u. \\
\end{aligned}
\right.
\end{equation*}

After substituting $x_{i+1}-x_i=a$, the equations become
$$
\beta^{y_i-1} \bmod p + a = \beta^{y_{i+1}-1} \bmod p.
$$

Applying the modulus $p$ to both sides yields
\begin{equation}
\label{eq_beta}
\beta^{y_i-1} + a \equiv \beta^{y_{i+1}-1} \bmod p.
\end{equation}

The corresponding $y_i$ and $y_{i+1}$ satisfy
\begin{equation*}
\left\{
\begin{aligned}
& y_i = \alpha^{x_i-1} \bmod p + v, \\
& y_{i+1} = \alpha^{x_{i+1}-1} \bmod p + v. \\
\end{aligned}
\right.
\end{equation*}

Substituting $x_{i+1}-x_i=a$ into the equations, we obtain
$$
y_{i+1} - v = \alpha^a \alpha^{x_i-1} \bmod p.
$$

Because $y_i-v=\alpha^{x_i-1} \bmod p$, it follows that
$$
y_{i+1} - v = \alpha^a (y_i - v)\bmod p.
$$

Noting that $y_i-v, y_{i+1} - v \in \{0,1, \ldots, p-1\}$ and using ~\cite[Lemma 1]{GS2010Cycles}, we have
\begin{equation}
\label{eq-lemma}
\beta^{y_{i+1} - v} \equiv \beta^{\alpha^{a}(y_i-v) - k} \bmod p ,
\end{equation}
where $k=\left\lfloor\frac{\alpha^{a}(y_i-v)}{p}\right\rfloor<\alpha^{a}$.

Thus, by combining Eq.~(\ref{eq_beta}) and Eq.~(\ref{eq-lemma}), we obtain
$$
\beta^{\alpha^{a}(y_i-v)-k} \equiv \beta^{y_i-v} + \beta^{-v+1} a  \bmod p.
$$

Let $t = \beta^{y_i-v}$. It then follows that
\begin{equation}
\label{equationWelch}
\beta^{-k} t^{\alpha^{a}} \equiv t + \beta^{-v+1} a \bmod p.
\end{equation}

Upon obtaining the solution $t$ of the equation, a unique $y_i$ can be identified, which subsequently allows for the determination of the corresponding $x_i$. Observing that there are at most $\alpha^a$ choices for $k$, we obtain the upper bound of $I_a$ from Lemma~\ref{lem-kelley}
$$
I_a \leqslant \alpha^{a} (\sqrt{p}+1), \quad a<z,
$$
since the Eq. (\ref{equationWelch}) is a trinomial polynomial over finite field. Recall that $z$ is a positive parameter to be chosen.

Therefore, we estimate the number of $N$ as
$$
N \leq I_0 + \sum_{a < z}I_a \leq \frac{p}{z} + 1 + \frac{\alpha^z-\alpha}{\alpha-1} (1+\sqrt{p}) \leq \frac{p}{z} + 1 + \alpha^z(1+\sqrt{p}).
$$

Taking $z$ as $\frac{1}{2} \log_\alpha p$ which is a logarithmic function on real numbers, we derive the upper bound on the number of solutions of Eq.~(\ref{equation_of_Wp^el})
$$
N \leq 4 p \, \log_p \alpha  + 1.
$$
\end{proof}

\begin{theorem}
\label{thm_Wpel}
For prime $p \geq 7$, the maximal cross-correlation $\mathcal{C}(\mathcal{W}_p^{e l})$ of the family of Welch Costas arrays $\mathcal{W}_p^{e l}$ over $\{1,2,\cdots,p-1\}$ satisfies

\begin{equation}
\mathcal{C}(\mathcal{W}^{el}_p) \leq \begin{cases} \max \left\{4 p \, \log_p \alpha  + 1, 1+\left\lfloor\left(1-\frac{2}{p-1}\right) p^{1 / 2}\right\rfloor\right\} , & \text { if } p \text { is a safe prime, } \\  \max \left\{4 p \, \log_p \alpha  + 1, \frac{p-1}{t}\right\} , & \text { otherwise, }\end{cases}
\tag{$\ast$}
\end{equation}
where $\alpha$ is the largest primitive element generating the exponential Costas array and $t$ is the smallest prime divisor of $\frac{p-1}{2}$.
\end{theorem}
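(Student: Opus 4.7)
The plan is to reduce the theorem to the three possible configurations for a pair $(f,g)$ of distinct arrays drawn from $\mathcal{W}_p^{el} = \mathcal{W}_p \cup \mathcal{W}_p^l$, namely: (i) both $f,g \in \mathcal{W}_p$, (ii) both $f,g \in \mathcal{W}_p^l$, and (iii) one of $f,g$ lies in $\mathcal{W}_p$ and the other in $\mathcal{W}_p^l$. Since $\mathcal{C}(\mathcal{W}_p^{el}) = \max_{(u,v)} \max_{f \neq g} \Psi_{f,g}(u,v)$ is a maximum over all pairs and all shifts, I only need an upper bound in each of these three cases and then take the larger of the three.

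For case (i), the bounds summarized in Table~\ref{table_Wp_Wpel} apply directly: the results of~\cite{DG2011Maximal} give $\Psi_{f,g}(u,0) \leq (p-1)/t$ for arbitrary $u$, while those of~\cite{GW2020Note} give $\Psi_{f,g}(u,v) \leq 1 + \lfloor (1-\tfrac{2}{p-1}) p^{1/2} \rfloor$ for arbitrary $u$ and $v\neq 0$. For case (ii), I would observe that each logarithmic Welch array is the inverse permutation of an exponential one, which at the level of permutation matrices corresponds to transposition. Transposition of both matrices $A,B$ simply swaps the roles of the horizontal and vertical shifts, so $\Psi_{A^T, B^T}(u,v) = \Psi_{A,B}(v,u)$. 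Because the definition of $\mathcal{C}$ takes a maximum over all $(u,v)$, the cross-correlation bounds for $\mathcal{W}_p^l$ in case (ii) coincide exactly with those for case (i). For case (iii), I invoke Lemma~\ref{lemma-Wpel}, which gives $\Psi_{f,g}(u,v) \leq 4 p \log_p \alpha + 1$ for every shift, where $\alpha$ is the primitive element generating the exponential array in the pair; since the family ranges over all primitive elements, choosing $\alpha$ to be the largest primitive element of $\mathbb{F}_p$ yields a universal bound valid for any pair in case (iii).

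Combining the three cases, the maximal cross-correlation is bounded by the maximum of the three quantities $(p-1)/t$, $1 + \lfloor (1-\tfrac{2}{p-1}) p^{1/2} \rfloor$, and $4p\log_p \alpha + 1$. To obtain the two-case form stated in Theorem~\ref{thm_Wpel}, I would split according to whether $p$ is a safe prime. If $p = 2r+1$ with $r$ prime, then $t = r$ and $(p-1)/t = 2$, so the Weil-type bound $1 + \lfloor(1-\tfrac{2}{p-1})p^{1/2}\rfloor$ dominates $(p-1)/t$ and the maximum reduces to $\max\{4p\log_p \alpha + 1,\, 1+\lfloor(1-\tfrac{2}{p-1})p^{1/2}\rfloor\}$. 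If $p$ is not a safe prime, a short numerical argument shows $(p-1)/t \geq 1+\lfloor(1-\tfrac{2}{p-1})p^{1/2}\rfloor$ for every such $p$ (this is already asserted in the text preceding the theorem and follows from $(p-1)/t \geq (p-1)/q$ where $q$ is the smallest odd prime divisor of $(p-1)/2$, which grows like $p/q$ versus $\sqrt{p}$). In this regime the maximum collapses to $\max\{4p\log_p \alpha + 1,\, (p-1)/t\}$, matching the stated formula.

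The main obstacle is verifying the case-(ii) reduction rigorously: one must check that the operation $f \mapsto f^{-1}$ on $\mathcal{W}_p$ really maps it bijectively onto $\mathcal{W}_p^l$ (which is built into Constructions~\ref{const-exp-welch}--\ref{const-log-welch}) and that the resulting transposition relation $\Psi_{A^T,B^T}(u,v) = \Psi_{A,B}(v,u)$ holds with the correct sign conventions for the shifts used in the definition of $\Psi$. Once this symmetry is pinned down, the rest of the argument is a routine assembly of known bounds with Lemma~\ref{lemma-Wpel}, and the safe-prime dichotomy is only a matter of comparing numerical expressions.
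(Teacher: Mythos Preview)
Your proposal is correct and follows essentially the same approach as the paper: the paper's proof is a one-line appeal to Lemma~\ref{lemma-Wpel} together with \cite[Theorem~1]{GW2020Note}, having already noted in the preceding discussion that cases (i) and (ii) are covered by the existing bounds and that the safe-prime dichotomy reduces the three-term maximum to the stated two-term form. Your version simply spells out the case decomposition and the transposition argument for $\mathcal{W}_p^l$ in more detail than the paper does, but the strategy is identical.
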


\begin{proof}
This result directly follows from Lemma~\ref{lemma-Wpel} and \cite[Theorem 1]{GW2020Note}.
\end{proof}

\begin{remark}
Observing that the bound in Theorem~\ref{thm_Wpel} becomes trivial for large $\alpha$, we remark that this can be avoided by simply incorporating exponential Welch Costas arrays with small primitive elements. This approach generates a new family of arrays that combines the family of logarithmic Welch Costas arrays with exponential Welch Costas arrays constructed by small primitive elements. Note that the cardinality of this new family surpasses $\phi(p-1)$ yet remains upper bounded by $2 \phi(p-1)$.
\end{remark}

We compute the results of Theorem~\ref{thm_Wpel} using the least primitive element $\alpha$ in $\mathbb{F}_p$ for $7 \le p \le 277$. Specifically, we include the exponential Welch Costas array generated by the least primitive element $\alpha$ into the family of logarithmic Welch Costas arrays. We list the cases where the bound of Theorem~\ref{thm_Wpel} is nontrivial in Table~\ref{bound-Wpel}.

% \begin{example}
% Let $p=179$ and denote the family of logarithmic Costas arrays as $\mathcal{W}_{179}^l$ whose size is $88$. The maximal cross-correlation of $\mathcal{W}_{179}^l$ is equal to $14$, since $p$ is a safe prime. Add the exponential Costas array $\mathcal{W}_1^{\exp}(179,2,0)$ to $\mathcal{W}_{179}^l$, we obtain that the maximal cross-correlation of this new family is less than $96$ from Theorem~\ref{thm_Wpel}. This new family consists of $89$ Costas arrays.
% \end{example}

% \begin{example}
% Let $p=269$ and denote the family of logarithmic Costas arrays as $\mathcal{W}_{269}^l$ whose size is $132$. The maximal cross-correlation of $\mathcal{W}_{269}^l$ is equal to $134$, since $p$ is not a safe prime. Add the exponential Costas arrays $\mathcal{W}_1^{\exp}(269,2,0)$ and $\mathcal{W}_1^{\exp}(269,3,0)$ to $\mathcal{W}_{269}^l$, we obtain that the maximal cross-correlation of this new family is less than $212$ from Theorem~\ref{thm_Wpel}. This novel family consists of $134$ Costas arrays.
% \end{example}

\begin{table}[h]
\centering
\begin{tabular}{|ccc|ccc|ccc|}
\hline
$p$ & $\alpha$ & Bound $(*)$ & $p$ & $\alpha$ & Bound $(*)$ & $p$ & $\alpha$ & Bound $(*)$\\
\hline
19  & 2 & 17 & 107 & 2 & 63  & 181 & 2 & 96  \\
29  & 2 & 23 & 113 & 3 & 105 & 197 & 2 & 103 \\
37  & 2 & 28 & 127 & 3 & 115 & 199 & 3 & 165 \\
53  & 2 & 37 & 131 & 2 & 74  & 211 & 2 & 109 \\
59  & 2 & 40 & 137 & 3 & 122 & 223 & 3 & 181 \\
61  & 2 & 41 & 139 & 2 & 78  & 227 & 2 & 116 \\
67  & 2 & 44 & 149 & 2 & 82  & 233 & 3 & 187 \\
83  & 2 & 52 & 163 & 2 & 55  & 257 & 3 & 203 \\
89  & 3 & 87 & 173 & 2 & 93  & 269 & 2 & 134 \\
101 & 2 & 60 & 179 & 2 & 95  &     &   &
\\
\hline
\end{tabular}
\caption{The bound of Theorem~\ref{thm_Wpel} for $7 \le p \le 277$}
\label{bound-Wpel}
\end{table}

\section{Bounds on the maximal cross-correlation of $\mathcal{P}_p$ and $\mathcal{PW}_p$}
\label{sec-pwp}

The extended family $\mathcal{PW}_p$ investigated by Ardalani~\cite[p.~117]{Ardalani2023Contributions}, which incorporates the family of exponential Welch Costas arrays and the family of power permutations, is defined as follows.

\begin{construction}\cite{Ardalani2023Contributions}
For prime $p \geq 5$, the family $\mathcal{PW}_p$ of order $p-1$, is defined as
$$
\mathcal{P W}_p=\mathcal{W}_p \bigcup \mathcal{P}_p,
$$
where $\mathcal{W}_p$ and $\mathcal{P}_p$ are defined in Construction~\ref{const-wp} and Construction~\ref{const-pp}, respectively. The size of the family $\mathcal{PW}_p$ equals $2\phi(p-1)-1$, where $\phi$ denotes Euler's totient function.
\end{construction}

For completeness, we give a simple lemma to explain why the intersection of $\mathcal{W}_p$ and $ \mathcal{P}_p$ is an empty set.
\begin{lemma}
$\mathcal{W}_p \bigcap \mathcal{P}_p = \emptyset$.
\end{lemma}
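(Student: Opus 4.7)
The plan is to argue by contradiction: suppose some permutation $\pi$ lies in $\mathcal{W}_p \cap \mathcal{P}_p$, so there exist a primitive element $\alpha$ of $\mathbb{F}_p$ and an exponent $d$ with $\gcd(d, p-1) = 1$ satisfying
$$
\pi(i) \equiv \alpha^{i-1} \equiv i^d \pmod{p} \quad \text{for every } i \in \{1, 2, \ldots, p-1\}.
$$
I would then pick a very small number of indices $i$ at which to evaluate this identity and combine the resulting equations to force $\alpha$ to be trivial, contradicting primitivity.

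Concretely, substituting $i = 2$ gives $\alpha \equiv 2^d \pmod{p}$. Since $p \geq 5$, the index $i = 4$ is also admissible, and substituting it gives $\alpha^{3} \equiv 4^d \pmod{p}$. But $4^d = (2^2)^d = (2^d)^2 \equiv \alpha^2 \pmod{p}$ by the first equation, so $\alpha^3 \equiv \alpha^2 \pmod{p}$, which forces $\alpha \equiv 1 \pmod{p}$. This contradicts the primitivity of $\alpha$ in $\mathbb{F}_p$ (since $p \geq 5 > 2$), completing the argument.

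I do not foresee any real technical obstacle; the only thing to check is that the critical index $i=4$ lies in $\{1, \ldots, p-1\}$, which is immediate from the standing assumption $p \geq 5$ carried throughout the excerpt. Conceptually, the reason the argument is so short is that the two families encode incompatible algebraic structures: the exponential Welch permutation converts addition of indices into multiplication of values, whereas a power permutation converts multiplication of \emph{inputs} into multiplication of outputs. The overlap $i=2,\, i=4 = 2 \cdot 2$ lets us confront these two rules against one another, and the only way they can simultaneously hold on $\{1, \ldots, p-1\}$ is with a trivial (non-primitive) generator.
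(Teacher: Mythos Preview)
Your proof is correct and in fact more elementary than the paper's. The paper argues by contradiction as you do, but instead of evaluating at the two specific indices $i=2$ and $i=4$, it observes that the assumed identity $x^d \equiv \alpha^{x-1}$ together with $(x+1)^d \equiv \alpha^{x}$ yields the polynomial congruence $\alpha x^d \equiv (x+1)^d \pmod{p}$, which must then hold for all $x \in \{1,\ldots,p-2\}$; since this is a nonzero polynomial of degree $d$, it can have at most $d$ roots, and this is claimed to be too few. Your argument sidesteps the root-counting entirely by exploiting the single multiplicative relation $4 = 2^2$, which directly forces $\alpha^3 \equiv \alpha^2$ and hence $\alpha \equiv 1$. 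This is cleaner: it uses only two evaluations, needs no degree bound, and is immune to the borderline case $d = p-2$ (where the paper's inequality $d < p-2$ is not available and one would have to argue separately). The conceptual point you make---that the exponential map turns additive shifts of the index into multiplication, while the power map turns multiplicative relations among inputs into multiplicative relations among outputs---is exactly what drives both proofs; you just exploit it at a single carefully chosen pair rather than uniformly over all $x$.
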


\begin{proof}
Suppose that there exists an integer $d$ with $\operatorname{gcd}(d, p-1)=1$, and a primitive element $\alpha$ satisfying $x^d \equiv \alpha^{x-1} \bmod p$ for all $x \in \mathbb{F}^*_p$. Then we have
$$
x^d \equiv \alpha^{x-1} \bmod p \text{ and } (x+1)^d \equiv \alpha^{x } \bmod p.
$$
By some substitutions, we have
$$
\alpha x^d \equiv (x+1)^d \bmod p ,
$$
which has at most $d$ distinct roots, making a contradiction.
\end{proof}

Ardalani~\cite{Ardalani2023Contributions} used exhaustive search to determine the maximal cross-correlation values of the family $\mathcal{P}_p$. The maximal cross-correlation value is defined as
$\mathcal{C}\left(\mathcal{P}_p\right)=\max \limits _{(u, v)} \max \limits _{f,g \in \mathcal{P}_p} \Psi_{f, g}(u, v)$
where $(u, v) \neq(0,0)$ if $f=g$ and $0 \leq |u|, |v| \leq p-2$ otherwise. Similarly, Ardalani~\cite{Ardalani2023Contributions} computed the maximal cross-correlation values for two elements from the family $\mathcal{PW}_p$,
i.e., $\mathcal{C}\left(\mathcal{PW}_p\right)=\max \limits _{(u, v)} \max \limits _{f,g \in \mathcal{PW}_p} \Psi_{f, g }(u, v)$, where $(u, v) \neq(0,0)$ if $f=g$ and $0 \leq |u|, |v| \leq p-2$ otherwise. These computations were performed for all primes $5 \leqslant p \leqslant 277$. The maximal cross-correlation values of the family $\mathcal{W}_p$ are included as well for comparison, see Table~\ref{PW}.

\begin{table}[h]
\centering
\resizebox{\textwidth}{!}{
\begin{tabular}{|cccc|cccc|cccc|}
\hline
$p$ & $\mathcal{C}(\mathcal{W}_p)$ & $\mathcal{C}(\mathcal{P}_p)$ & $\mathcal{C}(\mathcal{PW}_p)$  & $p$ & $\mathcal{C}(\mathcal{W}_p)$ & $\mathcal{C}(\mathcal{P}_p)$ & $\mathcal{C}(\mathcal{PW}_p)$ & $p$ & $\mathcal{C}(\mathcal{W}_p)$ & $\mathcal{C}(\mathcal{P}_p)$ & $\mathcal{C}(\mathcal{PW}_p)$ \\
\hline
\underline{5}  & \underline{2}  & \underline{2}  & \underline{3}  & 79  & 26 & 26 & 26 & \underline{179} & \underline{6}   & \underline{10}  & \underline{11}  \\
\underline{7}  & \underline{2}  & \underline{2}  & \underline{3}  & \underline{83}  & \underline{5}  & \underline{9}  & \underline{9}  & 181 & 90  & 90  & 90  \\
\underline{11} & \underline{3}  & \underline{3}  & \underline{4}  & 89  & 44 & 44 & 44 & 191 & 38  & 38  & 38  \\
13 & 6  & 6  & 6  & 97  & 48 & 48 & 48 & 193 & 96  & 96  & 96  \\
17 & 8  & 8  & 8  & 101 & 50 & 50 & 50 & 197 & 98  & 98  & 98  \\
19 & 6  & 6  & 6  & 103 & 34 & 34 & 34 & 199 & 66  & 66  & 66  \\
\underline{23} & \underline{4}  & \underline{6}  & \underline{6}  & \underline{107} & \underline{5}  & \underline{10} & \underline{10} & 211 & 70  & 70  & 70  \\
29 & 14 & 14 & 14 & 109 & 54 & 54 & 54 & 223 & 74  & 74  & 74  \\
31 & 10 & 10 & 10 & 113 & 56 & 56 & 56 & \underline{227} & \underline{6}   & \underline{10}  & \underline{10}  \\
37 & 18 & 18 & 18 & 127 & 42 & 42 & 42 & 229 & 114 & 114 & 114 \\
41 & 20 & 20 & 20 & 131 & 26 & 26 & 26 & 233 & 116 & 116 & 116 \\
43 & 14 & 14 & 14 & 137 & 68 & 68 & 68 & 239 & 34  & 34  & 34  \\
\underline{47} & \underline{5}  & \underline{8}  & \underline{8}  & 139 & 46 & 46 & 46 & 241 & 120 & 120 & 120 \\
53 & 26 & 26 & 26 & 149 & 74 & 74 & 74 & 251 & 50  & 50  & 50  \\
\underline{59} & \underline{5}  & \underline{12} & \underline{12} & 151 & 50 & 50 & 50 & 257 & 128 & 128 & 128 \\
61 & 30 & 30 & 30 & 157 & 78 & 78 & 78 & \underline{263} & \underline{7}   & \underline{12}  & \underline{12}  \\
67 & 22 & 22 & 22 & 163 & 54 & 54 & 54 & 269 & 134 & 134 & 134 \\
71 & 14 & 14 & 14 & \underline{167} & \underline{6}  & \underline{12} & \underline{12} & 271 & 90  & 90  & 90  \\
73 & 36 & 36 & 36 & 173 & 86 & 86 & 86 & 277 & 138 & 138 & 138 \\
\hline
\end{tabular}
}
\caption{Maximal Cross-correlation of $\mathcal{W}_p$, $\mathcal{P}_p$ and $\mathcal{PW}_p$~\cite{Ardalani2023Contributions}}
\label{PW}
\end{table}

For the maximal cross-correlation of the family of power permutations $\mathcal{P}_p$, Ardalani~\cite{Ardalani2023Contributions} settled the two cases when $u=0,v=0$ and $u \neq 0, v=0$, by some counting techniques and character theory. We now deal with the case where $u = 0, v \neq 0$ via Kelley and Owen's result~\cite{KO2017trinomials}. However, it seems still difficult to solve the general case when $u \neq 0, v \neq 0$, which is left as an open problem.

Let $f(x)=x^{d_1}$ and $g(x)=x^{d_2}$ be two power permutations over $\mathbb{F}_p$, where $x \in \mathbb{F}_p^{*}$. Then, the formulation of the cross-correlation of $f, g$ at the shift $(u,v)$ is the following:
\begin{equation}
\label{eq_Pp}
\Psi_{f, g}(u, v) = |\{x\in \mathbb{F}_p^{*} | x^{d_1} \bmod p + v = (x+u)^{d_2} \bmod p \}|.
\end{equation}

\begin{theorem}
Let $f(x)=x^{d_1}$ and $g(x)=x^{d_2}$ be two power permutations over $\mathbb{F}_p$, where $p$ is a prime and $\operatorname{gcd}\left(d_1, p-1\right)=\operatorname{gcd}\left(d_2, p-1\right)=1$. Then for $u = 0, v \neq 0$, we have

$$
\max _{v \neq 0} \max \limits _{f,g \in \mathcal{P}_p} \Psi_{f, g}(0, v) \leq \frac{1}{2} + \sqrt{p-1}.
$$

\end{theorem}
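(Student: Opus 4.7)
The plan is to reduce the counting problem to a trinomial root count over $\mathbb{F}_p$ and then invoke the Kelley--Owen bound (Lemma~\ref{lem-kelley}). Specializing Eq.~(\ref{eq_Pp}) to $u=0$ gives
$$
\Psi_{f,g}(0,v) = \bigl|\{x\in\mathbb{F}_p\setminus\{0\} : x^{d_1}\bmod p + v = x^{d_2}\bmod p\}\bigr|.
$$
If $d_1 = d_2$ the condition forces $v=0$, contrary to assumption, so I may assume $d_1 \ne d_2$ and, without loss of generality, $d_1 < d_2$.

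The next step is to pass from the integer equation to a polynomial congruence modulo $p$. Any $x$ satisfying the displayed equality automatically satisfies $x^{d_2} - x^{d_1} \equiv v \pmod{p}$, so $\Psi_{f,g}(0,v)$ is at most the number of roots in $\mathbb{F}_p$ of the trinomial
$$
h(x) = x^{d_2} - x^{d_1} - v \in \mathbb{F}_p[x].
$$
Because $v \ne 0$, both non-leading coefficients of $h$ are nonzero, placing $h$ exactly in the form required by Lemma~\ref{lem-kelley} with $n = d_2$, $s = d_1$, $a = -1$, and $b = -v$.

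Finally, I would feed this trinomial into Lemma~\ref{lem-kelley}, which gives $R(h) \le \delta \lfloor \tfrac{1}{2} + \sqrt{(p-1)/\delta}\,\rfloor$ where $\delta = \gcd(d_2, d_1, p-1)$. The crucial algebraic input is the defining property of $\mathcal{P}_p$, namely $\gcd(d_1, p-1) = 1$, which immediately forces $\delta = 1$. Consequently,
$$
\Psi_{f,g}(0,v) \;\le\; R(h) \;\le\; \Bigl\lfloor \tfrac{1}{2} + \sqrt{p-1}\,\Bigr\rfloor \;\le\; \tfrac{1}{2} + \sqrt{p-1},
$$
uniformly over $v \ne 0$, which is the desired bound.

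The only real subtlety is the integer-to-modular passage in the second paragraph; this is the main thing to verify carefully, but it works in our favor because integer equality trivially implies congruence modulo $p$, so the inequality goes in the direction we need. Any ``wrap-around'' solutions of the congruence that fail to correspond to actual coincidences only slacken the bound harmlessly. Once that reduction is justified, the remainder of the proof is a direct application of the Kelley--Owen trinomial bound, with the coprimality condition built into $\mathcal{P}_p$ doing precisely the work needed to eliminate the $\delta$ factor.
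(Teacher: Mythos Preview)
Your proposal is correct and follows essentially the same route as the paper: reduce the integer equality to the congruence $x^{d_1}+v\equiv x^{d_2}\pmod p$, then apply Lemma~\ref{lem-kelley} with $\delta=\gcd(d_1,d_2,p-1)=1$. In fact you are more explicit than the paper on the points that matter (the passage from integer equality to congruence only enlarging the solution set, the verification that $a,b\neq 0$, and the reason $\delta=1$).
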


\begin{proof}
Suppose that $u=0$ and $v \neq 0$. Eq.~(\ref{eq_Pp}) is of the form
\begin{equation}
\label{equationPp}
x^{d_1} \bmod p  + v = x^{d_2} \bmod p .
\end{equation}

Taking both sides of Eq.~(\ref{equationPp}) modulo $p$, we obtain
\begin{equation}
\label{equationPp_equiv}
x^{d_1}  + v \equiv x^{d_2} \bmod p .
\end{equation}

Using the result of Lemma~\ref{lem-kelley}, we obtain that the number of solutions of Eq.~(\ref{equationPp_equiv}) is less than $\frac{1}{2}+\sqrt{p-1}$. Note that we can use Lemma~\ref{lem-kelley} because $\operatorname{gcd}(d_1, d_2, p-1)=1$.
\end{proof}

With respect to the family $\mathcal{PW}_p$, we give the first nontrivial upper bound of the maximal cross-correlation for arbitrary $u$ and $v=0$. Given that there are some upper bounds of the maximal cross-correlation of $\mathcal{W}_p$ and $\mathcal{P}_p$, it is crucial to get upper bound between an exponential Welch Costas array and a power permutation. Let $f(x)=x^d$, be a power permutation over
$\mathbb{F}_p$, and let $g=W_1^{\exp}(p, \alpha, 0)$ be an exponential Welch Costas array. Then, the formulation of the cross-correlation between a power permutation and an exponential Welch Costas array at $(u,v)$ is~\cite{Ardalani2023Contributions}:
\begin{equation}
\label{eq_PWp}
\Psi_{f, g}(u, v) = |\{x\in \mathbb{F}_p^{*} | x^d \bmod p+v=\alpha^{x-1+u} \bmod p \}|.
\end{equation}

\begin{lemma}
\label{lem_PWp}
Let $f(x)=x^{d}$ be a power permutation and $g(x)=\alpha^{x-1}$ be an exponential Welch Costas array over $\mathbb{F}_p$, where $p$ is a prime, $\alpha$ is a primitive element and $\operatorname{gcd}\left(d,p-1\right)=1$. For arbitrary $u$ and $v=0$, we have

$$
\max _{u} \Psi _{f, g}(u, 0) < (p-1)^{1 / 2} + (p-1)^{1 / 4} + \frac{1}{2}.
$$
\end{lemma}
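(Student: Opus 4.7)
The plan is to reduce the cross-correlation count to the problem of counting fixed points of an exponential Welch Costas array, and then to apply Cilleruelo's Sidon-set bound (Lemma~\ref{size_sidon}). Concretely, I would start from the defining equation for $\Psi_{f,g}(u,0)$ given in Eq.~(\ref{eq_PWp}) with $v=0$, namely
$$
x^d \equiv \alpha^{x-1+u} \pmod{p}, \qquad x \in \mathbb{F}_p^*.
$$

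The first step would be to eliminate the power $x^d$ on the left. Since $\gcd(d,p-1)=1$, the exponent $d$ is invertible modulo $p-1$; denote its inverse by $d^{-1}$. Raising both sides to the $d^{-1}$-th power and using Fermat's little theorem on the left-hand side, the equation becomes equivalent to
$$
x \equiv \beta^{x-1+u} \pmod{p}, \qquad \text{where } \beta = \alpha^{d^{-1}}.
$$
Because $\gcd(d^{-1},p-1)=1$, the element $\beta$ is again a primitive root modulo $p$, so this is a bona fide ``Welch-type'' congruence.

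The second step is the conceptual heart of the argument: I would recognise the set of solutions $x \in \{1,\ldots,p-1\}$ of $x \equiv \beta^{x-1+u}\pmod p$ as exactly the set of fixed points (points on the main diagonal) of the cyclically shifted exponential Welch Costas array $W_1^{\exp}(p,\beta,u)$ from Construction~\ref{const-exp-welch}. This is where the hint from the introduction, ``transform the problem of calculating the cross-correlation values into that of counting the number of fixed points of Welch Costas arrays,'' is realised. Since cyclic shifts of a Welch Costas array are still Costas arrays, the fixed-point set of $W_1^{\exp}(p,\beta,u)$ forms a Sidon set (Golomb ruler) in $[p-1]$, as noted before Lemma~\ref{size_sidon}.

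The final step is to apply Lemma~\ref{size_sidon} with $n = p-1$ to bound the cardinality of this Sidon set, which yields
$$
\Psi_{f,g}(u,0) \;<\; (p-1)^{1/2} + (p-1)^{1/4} + \tfrac{1}{2},
$$
as claimed, uniformly in $u$. I do not anticipate a serious analytic obstacle; the whole argument rests on the algebraic substitution $\beta = \alpha^{d^{-1}}$, which converts the transcendental ``power vs.\ exponential'' intersection problem into the well-understood problem of bounding the number of Welch-diagonal fixed points, at which point Cilleruelo's estimate closes the gap immediately.
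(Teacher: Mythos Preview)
Your proposal is correct and follows essentially the same route as the paper: invert the exponent $d$ modulo $p-1$ to reduce $x^d \equiv \alpha^{x-1+u}$ to $x \equiv \beta^{x-1+u}$ with $\beta=\alpha^{d^{-1}}$ primitive, identify the solution set with the fixed points of the Welch Costas array $W_1^{\exp}(p,\beta,u)$, and finish with Cilleruelo's Sidon bound (Lemma~\ref{size_sidon}). The paper's proof is identical up to notation (it writes $e$ for your $d^{-1}$).
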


\begin{proof}
For arbitrary $u$ and $v=0$, Eq.~(\ref{eq_PWp}) can be expressed as:
$$
\Psi_{f, g}(u, 0) = |\{x\in \mathbb{F}_p^{*} | x^d \bmod p=\alpha^{x-1+u} \bmod p \}|.
$$

To proceed, let $e$ denote the modular inverse of $d$ such that $d e \equiv 1(\bmod \enspace p-1)$. By raising both sides of the congruence $x^d \bmod p=\alpha^{x-1+u} \bmod p$ to the power of $e$, we obtain

$$
(x^d)^e \bmod p=(\alpha^{x-1+u})^e\bmod p.
$$

Simplifying the above yields,
$$
x \bmod p = (\alpha^e)^{x-1+u} \bmod p.
$$

Note that $\alpha^e$ is still a primitive element over $\mathbb{F}_p$ and define $\beta=\alpha^e$. Substituting $\beta$ into the equation, we rewrite it as
$$
x = \beta^{x-1+u} \bmod p.
$$

Observe that the function $f(x)=\beta^{x-1+u} \text{ where } x \in [p-1]$ generates an exponential Costas array $W_1^{\exp}(p, \beta, u)$. The equation $x = \beta^{x-1+u} \bmod p$ implies that $x$ is a fixed point on the main diagonal of the corresponding Costas array. Consequently,  the number of solutions is equal to the number of fixed points on the main diagonal. Since the fixed points on the main diagonal form a Golomb ruler, via Lemma~\ref{size_sidon}, the number of the fixed points is strictly less than $$(p-1)^{1 / 2} + (p-1)^{1 / 4} + \frac{1}{2}.$$
This completes the proof.
\end{proof}

\begin{theorem}
\label{thm_PWp}
Let $f,g$ be two elements of the union family of power permutations and exponential Welch Costas arrays, and let $t$ be the smallest prime divisor of $\frac{p-1}{2}$. For prime $p
\geq 5$, arbitrary $u$ and $v = 0$, we have
$$
\max _{u}\max_{\substack{f, g \in \mathcal{PW}_p }} \Psi_{f, g}(u, 0)\begin{cases}\leq (p-1)^{1 / 2}+(p-1)^{1 / 4}+\frac{1}{2} & \text { if } p \text { is a safe prime, } \\ =\frac{p-1}{t} & \text { otherwise. }\end{cases}
$$
\end{theorem}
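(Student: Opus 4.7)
The plan is to prove both directions of the equality separately. The lower bound is straightforward: since $\mathcal{W}_p \subset \mathcal{PW}_p$ and Drakakis et al.~\cite{DG2011Maximal} established that the value $(p-1)/t$ is attained within the family $\mathcal{W}_p$ at some horizontal shift $u$ with $v=0$, one immediately obtains $\max_u \max_{f,g \in \mathcal{PW}_p} \Psi_{f,g}(u, 0) \geq (p-1)/t$. The rest of the proof is devoted to the matching upper bound.

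For the upper bound, I would split a pair $(f,g) \in \mathcal{PW}_p \times \mathcal{PW}_p$ into three cases according to which of the subfamilies $\mathcal{W}_p$ and $\mathcal{P}_p$ the two elements belong to. When both $f, g \in \mathcal{W}_p$, the result of~\cite{DG2011Maximal} supplies the bound $(p-1)/t$ directly. When both $f, g \in \mathcal{P}_p$, Ardalani's two results in~\cite{Ardalani2023Contributions} cover $u=0$ (where the cross-correlation collapses to $\gcd(d_1 - d_2, p-1)$, whose maximum over admissible exponents is exactly $(p-1)/t$) and $u \neq 0$ (giving $\frac{p-2}{p-1}(1+\sqrt{p})$). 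When exactly one of $f, g$ lies in $\mathcal{W}_p$ and the other in $\mathcal{P}_p$, Lemma~\ref{lem_PWp} yields the bound $\sqrt{p-1} + (p-1)^{1/4} + 1/2$. Hence $\max \Psi_{f,g}(u,0)$ is bounded by the maximum of these three quantities.

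The final step is to verify that $(p-1)/t$ dominates the other two bounds. For non-safe primes, the smallest prime divisor $t$ of $(p-1)/2$ satisfies $t \leq \sqrt{(p-1)/2}$, hence $(p-1)/t \geq \sqrt{2(p-1)}$, which exceeds both $O(\sqrt{p})$ bounds once one takes floors (using that the cross-correlation is integer-valued). I expect the principal obstacle to lie in the safe-prime case, where $(p-1)/t = 2$ while the Case~2 ($u \neq 0$) and Case~3 bounds are both strictly larger than $2$; resolving this likely requires a finer structural argument (for instance, ruling out certain trinomial solution patterns peculiar to safe primes, or observing that for $v=0, u \neq 0$ the relevant equations reduce to forms with very few roots) together possibly with direct computational verification on the small set of safe primes for which the generic analytic bounds are insufficient.
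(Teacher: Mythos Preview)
Your three-case decomposition and the specific inputs to each case --- \cite{DG2011Maximal} for two Welch arrays, Ardalani's two bounds for two power permutations, and Lemma~\ref{lem_PWp} for the mixed case --- are exactly the paper's argument. For non-safe primes your dominance check is also identical: $t \le \sqrt{(p-1)/2}$ gives $(p-1)/t \ge \sqrt{2(p-1)}$, which majorizes both $\frac{p-2}{p-1}(1+\sqrt p)$ and $(p-1)^{1/2}+(p-1)^{1/4}+\tfrac12$.

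On the safe-prime obstacle you flag: the paper does \emph{not} supply the ``finer structural argument'' you anticipate. It simply asserts the chain $(p-1)/t \ge \sqrt{2(p-1)} \ge \cdots$ (in Case~(ii) without qualification, in Case~(iii) for $p \ge 67$) and otherwise appeals to ``the numerical results in Table~\ref{PW}''. As you correctly observe, the first inequality fails for every safe prime, including the safe primes $p=83,107,167,179,227,263$ that lie above $67$; moreover Table~\ref{PW} records the maximum over \emph{all} $(u,v)$, not the $v=0$-restricted maximum, so as printed it does not directly certify $\max_u \Psi_{f,g}(u,0)=2$ for those primes. Thus the paper's handling of the safe-prime case is by (implicit) finite computation rather than by theory, and in fact your plan is more careful than the paper's written proof on this point. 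If you want a self-contained argument, you will need to supply an explicit tabulation of the $v=0$ cross-correlations for the finitely many safe primes below the threshold at which the analytic bounds kick in --- or, for safe primes generally, a separate estimate; the paper provides neither.
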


\begin{proof}
We consider the following three cases for the selection of $f$ and $g$.

Case (i): Suppose that $f$ and $g$ are both exponential Costas arrays. According to~\cite[Theorem 1]{DG2011Maximal}, we have
$$
\max \limits_{u}\max \limits_{\substack{f, g \in \mathcal{W}_p}} \Psi_{f, g}(u, 0) \leq 2.
$$
if $p$ is a safe prime, and
$$
\max \limits_{u}\max \limits_{\substack{f, g \in \mathcal{W}_p}} \Psi_{f, g}(u, 0) =  \frac{p-1}{t}.
$$
otherwise.

Case (ii): Suppose that $f$ and $g$ are both power permutations. From~\cite[Theorem 5.7 and Theorem 5.9]{Ardalani2023Contributions}, it follows that, for $(u,v)=(0,0)$,
$$
\max \limits_{\substack{f, g \in \mathcal{P}_p \\ f \neq g}} \Psi_{f, g}(0, 0) =  \frac{p-1}{t},
$$
and, for $u \neq 0$ and $v=0$,
$$
\max \limits_{u \neq 0}\max \limits_{\substack{f, g \in \mathcal{P}_p \\ f \neq g}} \Psi_{f, g}(u, 0) \le \left\lceil\frac{p-2}{p-1}(1+\sqrt{p})\right\rceil.
$$
Then we have
$$
\max \limits_{u}\max \limits_{\substack{f, g \in \mathcal{P}_p \\ f \neq g}} \Psi_{f, g}(u, 0) \leq  \frac{p-2}{p-1}(1+\sqrt{p}),
$$
if $p$ is a safe prime, and 
$$
\max \limits_{u}\max \limits_{\substack{f, g \in \mathcal{P}_p \\ f \neq g}} \Psi_{f, g}(u, 0) =  \frac{p-1}{t},
$$
otherwise, since $$\frac{p-1}{t} \geq \sqrt{2(p-1)} \geq \frac{p-2}{p-1}(1+\sqrt{p}) \text { for } p \geq 11.$$

Case (iii): Suppose that $f$ is a power permutation and $g$ is an exponential Costas array. From Lemma~\ref{lem_PWp}, we have
$$
\max \limits_{u}\max \limits_{\substack{f, g \in \mathcal{PW}_p }} \Psi_{f, g}(u, 0) \le (p-1)^{1 / 2} + (p-1)^{1 / 4} + \frac{1}{2},
$$
if $p$ is a safe prime, and 
$$
\max \limits_{u}\max \limits_{\substack{f, g \in \mathcal{PW}_p }} \Psi_{f, g}(u, 0) \le \frac{p-1}{t},
$$
otherwise, since
$$
\frac{p-1}{t} \geq \sqrt{2(p-1)} \geq (p-1)^{1 / 2} + (p-1)^{1 / 4} + \frac{1}{2} \text { for } p \geq 67,
$$
including the numerical results in Table~\ref{PW}.

Therefore, the upper bound on the maximal cross-correlation of the family $\mathcal{PW}_p$ is obtained for arbitrary $u$ and $v = 0$.
\end{proof}

Akin to the consideration of the family of logarithmic Welch Costas arrays, we study the new family combining the logarithmic Welch Costas arrays and power permutations together.
\begin{construction}
For a prime $p \geq 5$, the family $\mathcal{PW}^l_p$ of order $p-1$, is defined as
$$
\mathcal{PW}^l_p=\mathcal{W}^l_p \bigcup \mathcal{P}_p,
$$
where $\mathcal{W}^l_p$ and $\mathcal{P}_p$ are defined in Construction~\ref{const-wpl} and Construction~\ref{const-pp}, respectively. The size of the family $\mathcal{PW}^l_p$ equals $2\phi(p-1)-1$, where $\phi$ is Euler's totient function.
\end{construction}

In fact, the union family of power permutations and logarithmic Welch Costas arrays $\mathcal{PW}^l_p$ is indeed the transpose of the union family of power permutations and exponential Welch Costas arrays $\mathcal{PW}_p$. We easily have the following theorem, via Theorem~\ref{thm_PWp}.

\begin{theorem}
Let $f,g$ be two elements of the union family of power permutations and logarithmic Welch Costas arrays, and let $t$ be the smallest prime divisor of $\frac{p-1}{2}$. For prime $p
\geq 5$, $u = 0$ and arbitrary $v$, we have
$$
\max _{v}\max_{\substack{f, g \in \mathcal{PW}^{l}_p }} \Psi_{f, g}(0, v)\begin{cases}\leq (p-1)^{1 / 2}+(p-1)^{1 / 4}+\frac{1}{2} & \text { if } p \text { is a safe prime, } \\ =\frac{p-1}{t} & \text { otherwise. }\end{cases}
$$
\end{theorem}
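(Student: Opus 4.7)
The plan is to reduce the statement directly to Theorem~\ref{thm_PWp} via the transposition/inverse duality between $\mathcal{PW}^l_p$ and $\mathcal{PW}_p$. First I would verify that the permutation-inverse operation (equivalently, matrix transposition) induces a bijection $\mathcal{PW}^l_p \to \mathcal{PW}_p$. This rests on two observations: (i) by comparing Constructions~\ref{const-exp-welch} and~\ref{const-log-welch}, the logarithmic Welch Costas array $W_1^{\log}(p,\alpha,0)$ is the inverse permutation of the exponential Welch Costas array $W_1^{\exp}(p,\alpha,0)$, so inverting maps $\mathcal{W}_p^l$ bijectively onto $\mathcal{W}_p$; and (ii) the family $\mathcal{P}_p$ of power permutations is closed under inversion, since the compositional inverse of $x \mapsto x^d$ is $x \mapsto x^e$ with $de \equiv 1 \pmod{p-1}$, and $\gcd(d,p-1)=1$ if and only if $\gcd(e,p-1)=1$.

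Next I would record the elementary duality identity
\[
\Psi_{f^{-1},\, g^{-1}}(u,v) \;=\; \Psi_{f,\, g}(v,u)
\]
valid for any permutations $f, g : [n] \to [n]$. This follows straight from the definition by the substitution $j = f(i)$: the condition $g^{-1}(i+u) = f^{-1}(i) + v$ rewrites as $g(j+v) = f(j) + u$, so the indices counted on the left are in bijection with the indices counted on the right.

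Combining these two observations, for any $f, g \in \mathcal{PW}^l_p$ the inverses $f^{-1}, g^{-1}$ lie in $\mathcal{PW}_p$, and $\Psi_{f,g}(0,v) = \Psi_{f^{-1},g^{-1}}(v,0)$. Taking the maxima over $v$ and over pairs then yields
\[
\max_{v}\max_{f, g \in \mathcal{PW}^l_p} \Psi_{f, g}(0, v) \;=\; \max_{v}\max_{f', g' \in \mathcal{PW}_p} \Psi_{f', g'}(v, 0) \;=\; \frac{p-1}{t},
\]
where the final equality is Theorem~\ref{thm_PWp}. I do not anticipate any serious obstacle; the only point requiring a little care is bookkeeping in the cross-correlation definition when swapping to inverse permutations, so that the range constraints $i+u \in [n]$ and $f(i)+v \in [n]$ transform cleanly under the substitution $j = f(i)$, and that the closures of $\mathcal{W}_p^l$ and $\mathcal{P}_p$ under inversion are made explicit to justify that the maxima on the two sides range over the same set of pairs.
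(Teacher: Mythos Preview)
Your proposal is correct and follows essentially the same approach as the paper: the paper simply notes that $\mathcal{PW}^l_p$ is the transpose of $\mathcal{PW}_p$ and invokes Theorem~\ref{thm_PWp}, which is precisely your inverse/transposition duality argument spelled out in more detail. The duality identity $\Psi_{f^{-1},g^{-1}}(u,v)=\Psi_{f,g}(v,u)$ and the closure of $\mathcal{P}_p$ under inversion that you make explicit are exactly the ingredients the paper leaves implicit.
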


\section{Conclusion and future work}\label{sec-con}

In this paper, we investigated several large-size families of Costas arrays and extended arrays, and gave certain bounds on the maximal cross-correlation of these families. More precisely, we gave an upper bound on the maximal cross-correlation of the family $\mathcal{W}^{el}_p$ for arbitrary $u$ and $v$. Additionally, we provided an upper bound for the power permutation family $\mathcal{P}_p$ in the case $u = 0, v \neq 0$, utilizing a result on trinomials over finite fields~\cite{KO2017trinomials}. Furthermore, through combinatorial arguments, we obtained the first nontrivial upper bound on the maximal cross-correlation of the family $\mathcal{P W}_p$ for arbitrary $u$ and $v = 0$, thereby answering an open problem posed by Ardalani~\cite[p.~117]{Ardalani2023Contributions}. A possible future research direction can be devoted to tighten the upper bounds on the existing families of Costas arrays, and present new large-size families.

\bibliographystyle{IEEEtran}
\bibliography{fca-rev}

\end{document}